\newtheorem{lemma}{Lemma}
\newtheorem{theorem}{Theorem}
\newtheorem{remark}{Remark}
\renewcommand\vec{\boldsymbol}
\newcommand{\scal}[2]{\langle #1,#2\rangle}
\newcommand*\colvec[1]{
        \global\colveccount#1
        \begin{pmatrix}
        \colvecnext
}
\def\colvecnext#1{
        #1
        \global\advance\colveccount-1
        \ifnum\colveccount>0
                \\
                \expandafter\colvecnext
        \else
                \end{pmatrix}
        \fi
}
\begin{document}

\title{Geometric Stress Functions, Continuous and Discontinuous}
\author{Tam\'as Baranyai\footnote{Supported by the Ministry of Innovation
and Technology of Hungary from the National Research, Development and Innovation Fund under the
PD 142720 funding scheme.}}

\maketitle

\abstract{In his work on stress functions Maxwell noted that given a planar truss the internal force distribution may be described by a piecewise linear, $C^0$ continuous version of the Airy stress function. Later Williams and McRobie proposed that one can consider planar moment-bearing frames, where the stress function need not be even $C^0$ continuous. The two authors also proposed a discontinuous stress function for the analysis of space-frames, which however suffers from incompleteness. This paper provides a discontinuous stress function for $n$-dimensional space frames that is complete and minimal, along with its derivation from an $n$-dimensional continuous stress function.}

\section{Introduction}
In his work on stress functions Maxwell \cite{maxwell1870} noted that given a planar truss the internal force distribution may be described by a piecewise linear, $C^0$ continuous (polyhedral) version of the Airy stress function \cite{airy}. The bar forces correspond to the change of slope of the polyhedral Airy stress function, as a discrete version of the second derivatives. The same paper also deals with graphic representation of force distributions. It relates the polyhedral Airy stress function with the reciprocal diagram (Cremona force-plan \cite{cremona1890graphical} rotated with 90 degrees) of the planar truss. It also gives a scalar valued stress function for spatial problems the discrete analogue of which is compatible with force-diagrams of spatial trusses according to the representational idea of Rankine \cite{rankine1864xvii}; although Maxwell himself shows this stress function is not a complete description of static equilibrium. He then gives a vector valued stress function for spatial problems that is a superposition of three orthogonal copies of Airy's planar description. This vector valued stress function was later shown to be complete \cite{rostamian1979maxwell} for most engineering purposes.  

The marriage of graphic representation with stress functions was also heavily emphasised in some recent works in graphic statics: Williams and McRobie \cite{williams16} proposed that one can consider planar moment-bearing frames, where the stress function need not be even $C^0$ continuous. The three dimensional version of this discontinuous function, intended for space-frames was also introduced \cite{MCROBIE2017104,mrobiewilliams19discontinous} relying on the scalar valued stress function of Maxwell; where the authors themselves have shown their stress function to be an incomplete description of static equilibrium. It was later shown that among single layer grid-shells what this description can handle are discretized structures of membrane shells that have Airy stress functions linearly proportional to their shape functions \cite{CHIANG2022111768}. As there are perfectly good grid-shells that don't satisfy this constraint we believe a different, less restrictive discontinuous stress function will be a useful addition to the literature.\\

This paper provides a dimension-independent discontinuous stress function for space-frames that is complete and minimal. On the application-side this allows for the efficient computation of space-frames without the method prescribing spatial constraints on the structure. On the theoretical-side we believe the derivation of the stress function has additional explaining power and as such we present it here. During the derivation process we were guided by three requirements we set observing past works.
\begin{enumerate}
\item The discontinuous stress function has to be derived from a sufficiently complete continuous one. 
\item The continuous stress function has to be derived from the static equilibrium equations.
\item The description should be dimension independent.
\end{enumerate}
The first two requirements should be self explanatory. The third is motivated by the desire to solve the problem entirely, to give a description that does not loose its explaining power as the dimension of the space grows. When describing spatial rotations Weyl \cite{weyl1952space} phrased this rather eloquently: "The above treatment of the problem of rotation may, in contradistinction to the usual method, be transposed, word for word, from three-dimensional space to multi-dimensional spaces. This is, indeed, irrelevant in practice. On the other hand, the fact that we have freed ourselves from the limitation to a definite dimensional number and that we have formulated physical laws in such a way that the dimensional number appears {\bf accidental} in them, gives us an assurance that we have succeeded fully in grasping them mathematically."
\\

The three requirements set us up for a long walk on the border of mathematics and mechanics. We will need a considerable amount of mathematical tools which we introduce below.

\subsection{Notation, preliminaries}
We will use $k$-vector valued differential forms. One way of looking at it is that one has the Grassmann algebra in $\mathbb{R}^n$ where the scalar coordinates are replaced with differential forms. (Scalar valued forms in this interpretation, and the degree of them has to be the same in all coordinates). Real-number multiplication of the coordinates is replaced with the wedge product of differential forms. We will omit the wedge-product sign if possible and denote the "directions" of the Grassmann algebra with $\vec{x}_i,\vec{x}_i \vec{x}_j, \dots $ (for $1$-vectors, $2$-vectors, $\dots$) while the components of the differential forms with $\vec{d}_i, \vec{d}_i \vec{d}_j$ and so on.  The coordinates (functions) will be labelled by upper indices corresponding to the index-sets, for instance a $3$-form coordinate of a $2$-vector will have components like $\alpha^{12,123}\vec{d}_1\vec{d}_2\vec{d}_3 \ \vec{x}_1\vec{x}_2$. We will mostly follow the convention to list $\vec{x}_i$ and $\vec{d}_i$ in lexicographic order. 

\begin{remark}
Admittedly, this is far from standard notation. The point of choosing it is to keep track of what object has what mechanical meaning. See for instance forces and moments below.
\end{remark}

We will need the Hodge-duals \cite{flanders1963differential} of both $k$-vectors and $k$-forms (we assume the usual Euclidean metric). We denote the Hodge-star on $k$-vectors with $\star_x$, mapping a $k$-vector to an $(n-k)$-vector. This map may be given on the base vectors as
\begin{align}
\star_x : \bigwedge_{i\in \mathcal{I}} \vec{x}_i \mapsto (-1)^\tau \bigwedge_{j\in \mathcal{J}} \vec{x}_j
\end{align}
where $(1 \dots n )$ is the disjoint union of $\mathcal{I}$ and $\mathcal{J}$ and $\tau$ is the number of permutations required to bring $(\mathcal{I},\mathcal{J})$ to $(1 \dots n)$. Similarly we have the one on scalar valued differential forms as
\begin{align}
\star_d : \bigwedge_{i\in \mathcal{I}} \vec{d}_i \mapsto (-1)^\tau \bigwedge_{j\in \mathcal{J}} \vec{d}_j
\end{align}
where $(1 \dots n )$ is the disjoint union of $\mathcal{I}$ and $\mathcal{J}$ and $\tau$ is the number of permutations required to bring $(\mathcal{I},\mathcal{J})$ to $(1 \dots n)$. Under the Hodge-dual of a $k$-vector valued differential form we will mean the form achieved by applying composition of the two stars, i.e: $\star=\star_x\circ \star_d=\star_d \circ\star_x$. This way $\vec{\alpha} \wedge \star \vec{\alpha}$ gives an $n$-vector valued $n$-form. 

We will need the scalar product of $k$-vector valued $m$-forms $\vec{\alpha}$ and $\vec{\beta}$, defined as  
\begin{align}
\left[ \vec{\alpha} ; \vec{\beta} \right]:=\star(\vec{\alpha} \wedge \star \vec{\beta})
\end{align} 
the output of which is a real number.\\
 
  Forces will correspond to $1$-vectors while moments to $2$-vectors. Force $\vec{F}=\sum F^i \vec{x}_i$ acting at point $\vec{x}=\sum x^i \vec{x}_i$ has moment $\vec{x} \wedge \vec{F}$ with respect to the origin. In $\mathbb{R}^n$ this is a $2$-vector with ${n \choose 2}$ coordinates, as there exists a moment with respect to the ortho-complement of each plane. (A moment is a rotating effect of a force and rotations can happen in each plane of the space.) The moment introduced this way differs from the engineering moment in $\mathbb{R}^3$ having the sign of the second component flipped. This is captured in the relations $\vec{a}\times \vec{b} =\star_x(\vec{a}\wedge \vec{b})$,  $\star_x(\vec{a}\times \vec{b}) =\vec{a}\wedge \vec{b}$ (where $\vec{a},\vec{b}\in \mathbb{R}^3$).

Stresses in general are $(n-1)$-forms, that need to be integrated along $n-1$ dimensional hyper-surfaces. As a consequence they will be measured in $[N/mm^{d-1}]$. As an example, we express the $\vec{x}_1$ directional stresses in $\mathbb{R}^3$ as 
\begin{align}
(\sigma^{1,12} \vec{d}_1\vec{d}_2+\sigma^{1,13} \vec{d}_1\vec{d}_3 +\sigma^{1,23} \vec{d}_2\vec{d}_3) \vec{x}_1  
\end{align}
which corresponds to components of the Cauchy stress tensor with the usual $x,y,z$ description as
\begin{align}
\sigma^{1,12}=\sigma^{z,x} \ \sigma^{1,13}=-\sigma^{y,x} \ \sigma^{1,23}=\sigma^{x,x}
\end{align}  
 where the minus sign comes from the lexicographic ordering ($ 1 \vec{d}_1\vec{d}_3=-1 \vec{d}_3\vec{d}_1$).\\
  
Strains may be represented with a $1$-vector valued $1$-form $\vec{\epsilon}$. Given some volume $V$ of the material the work of the stresses on the strains inside $V$ may be expressed by integrating the volume form
\begin{align}
\int_V (\star_x \vec{\epsilon} \wedge \vec{\sigma})
\end{align}
which is an $n$-vector, thus isomorphic to a scalar. Here $\star_x \vec{\epsilon}$ is an $(n-1)$-vector valued $1$-form. The spaces of $1$-vectors and $(n-1)$ vectors play the role of the usual vector-space and the dual space of linear functionals. Since in case of $\mathbb{R}^n$ they are isomorphic to each other, no attempt is made here to assign the role of primal-space and dual-space.

Body forces (self-weight) correspond to vector-valued $n$-forms, we will denote them with $\vec{\rho}$.\\   
  
The exterior derivative is taken coordinate-wise on the $k$-vectors, we will denote it with $\text{d}( \ )$, omitting the parentheses if no confusion arises. The exterrior derivative satisfies $\text{d}^2\vec{\alpha}=\text{d}(\text{d}\vec{\alpha})=\vec{0}$ for all twice differentiable forms. It follows from the computational rules that given $k$-vector valued forms $\vec{\alpha}$ and $\vec{\beta}$ the Leibniz Rule \begin{align}
\text{d}(\vec{\alpha} \wedge \vec{\beta})=\text{d} \vec{\alpha} \wedge \vec{\beta} + (-1)^{\text{deg}(\vec{\alpha})} \vec{\alpha} \wedge \text{d} \vec{\beta}
\end{align} holds, where $\text{deg}(\vec{\alpha})$ denotes the degree of $\vec{\alpha}$. We will heavily rely on the following two results \cite{needham2021visual}:

\begin{lemma}[Poincar\'e Lemma]
If $\text{d} \vec{\alpha} =\vec{0}$ throughout a simply connected region, then $\exists \vec{\beta}: \ \vec{\alpha}=\text{d} \vec{\beta}$.
\end{lemma}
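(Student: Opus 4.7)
The plan is to reduce the statement to the classical (scalar-valued) Poincar\'e Lemma, exploiting the fact that the exterior derivative has been defined coordinate-wise on $k$-vector valued forms. First I would expand $\vec{\alpha}$ in the Grassmann basis as $\vec{\alpha}=\sum_{\m{I}} \alpha^{\m{I}}\bigwedge_{i\in\m{I}}\vec{x}_i$, where each $\alpha^{\m{I}}$ is an ordinary scalar-valued differential form and $\m{I}$ runs over the lexicographically ordered multi-indices labelling a basis of $k$-vectors. Because the $\vec{x}_i$ are constant and the derivative acts only on the form coordinates, $\text{d}\vec{\alpha}=\sum_{\m{I}}(\text{d}\alpha^{\m{I}})\bigwedge_{i\in\m{I}}\vec{x}_i$, and the hypothesis $\text{d}\vec{\alpha}=\vec{0}$ forces $\text{d}\alpha^{\m{I}}=0$ for every $\m{I}$ separately, since the basis $k$-vectors are linearly independent.

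Next I would invoke the classical Poincar\'e Lemma for scalar-valued forms on the simply connected region (or, for forms of degree $\ge 2$, on a contractible sub-region covering the domain; the subtlety that simple connectedness is only the degree-$1$ statement can be absorbed into the standing assumption or handled locally and then patched). This produces, for each index set $\m{I}$, a scalar-valued primitive $\beta^{\m{I}}$ with $\text{d}\beta^{\m{I}}=\alpha^{\m{I}}$. Concretely one can use the standard cone homotopy $K$ satisfying $\text{d}K+K\text{d}=\mathrm{id}$ on positive-degree forms, so that $\beta^{\m{I}}:=K\alpha^{\m{I}}$ works; this is the construction one would cite from \cite{needham2021visual}.

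Finally I would assemble the $k$-vector valued primitive $\vec{\beta}:=\sum_{\m{I}} \beta^{\m{I}}\bigwedge_{i\in\m{I}}\vec{x}_i$ and check, again using the coordinate-wise nature of $\text{d}$ and the constancy of the Grassmann basis, that $\text{d}\vec{\beta}=\sum_{\m{I}}(\text{d}\beta^{\m{I}})\bigwedge_{i\in\m{I}}\vec{x}_i=\sum_{\m{I}}\alpha^{\m{I}}\bigwedge_{i\in\m{I}}\vec{x}_i=\vec{\alpha}$.

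I do not expect any genuine obstacle: the entire content of the lemma is the scalar case, while the $k$-vector structure is essentially decorative and contributes no coupling between components. The only point that requires any care is the standard mismatch between ``simply connected'' (which is the topologically correct hypothesis only for closed $1$-forms) and the stronger ``contractible'' needed in higher degrees; in the applications of this paper the domains of interest will be star-shaped patches, so the statement as invoked is safe.
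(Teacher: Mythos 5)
The paper does not prove this lemma at all: it is quoted as a known result from the cited reference on exterior calculus, so there is no in-paper argument to compare against. Your reduction is nevertheless correct and is the natural way to establish the $k$-vector valued version: since $\text{d}$ is defined coordinate-wise on the (constant) Grassmann basis, $\text{d}\vec{\alpha}=\vec{0}$ decouples into $\text{d}\alpha^{\mathcal{I}}=0$ for each scalar coordinate, the classical Poincar\'e Lemma (via the cone/homotopy operator on a star-shaped or contractible domain) supplies primitives $\beta^{\mathcal{I}}$, and reassembling gives $\vec{\beta}$ with $\text{d}\vec{\beta}=\vec{\alpha}$. There is no coupling between components, so the vector-valued statement carries no extra content beyond the scalar one.

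The one substantive point you raise --- that ``simply connected'' is the topologically correct hypothesis only for closed $1$-forms, while closed forms of degree $\ge 2$ require contractibility (or at least vanishing of the relevant de Rham cohomology) for a global primitive --- is a genuine imprecision in the lemma \emph{as stated in the paper}, not a gap in your argument. You handle it the right way: either strengthen the hypothesis to contractible/star-shaped, or note that in the paper's applications (potentials $\vec{\pi}$, $\vec{\psi}$, $\vec{\omega}$, $\vec{\Psi}$, $\vec{\Omega}$ constructed on all of $\mathbb{R}^n$ or on convex cells of a polyhedral mosaic) the domains are star-shaped, so the stronger hypothesis is available for free. Your proposal is sound.
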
  
\begin{theorem}[Fundamental Theorem of Exterior Calculus / Generalized Stokes's Theorem]
Given a compact, oriented $(p+1)$ dimensional region $R$, its boundary $\partial R$ and $p$-form $\vec{\alpha}$:
$\int_R \text{d}\vec{\alpha} = \int _{\partial R} \vec{\alpha}$.
\end{theorem}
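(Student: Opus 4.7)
The plan is to establish the theorem first for a single coordinate chart (a cube), then patch together via a partition of unity, and finally observe that the $k$-vector-valued statement reduces coordinate-wise to the scalar case.

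First I would reduce to a local statement. Choose an atlas of coordinate charts on $R$ and a smooth partition of unity $\{\varphi_\beta\}$ subordinate to it, with each $\varphi_\beta$ supported either in the interior of $R$ or in a boundary chart modelled on a half-cube. Writing $\vec{\alpha}=\sum_\beta \varphi_\beta \vec{\alpha}$ and using linearity of both $\text{d}$ and $\int$, it suffices to prove the identity for each summand. For interior summands the boundary integral vanishes (the form has compact support in the interior), and the bulk integral vanishes by the argument below, so we are reduced to proving the theorem when $\vec{\alpha}$ is compactly supported inside a single boundary chart $[0,1]^{p+1}$, with the boundary face being $\{x_{p+1}=0\}$.

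Second, since the exterior derivative, the wedge product, and integration are all taken coordinate-wise on the Grassmann directions $\vec{x}_{i_1}\cdots \vec{x}_{i_k}$, the $k$-vector-valued identity is equivalent to the collection of scalar-form identities obtained by reading off each Grassmann coordinate. Thus it is enough to prove the theorem for a scalar-valued $p$-form on $[0,1]^{p+1}$, and by linearity for a single monomial
\begin{align}
\alpha = f(x_1,\dots,x_{p+1})\, \vec{d}_1\wedge \cdots\wedge \widehat{\vec{d}_j}\wedge \cdots \wedge \vec{d}_{p+1},
\end{align}
where the hat denotes omission.

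Third, I would carry out the explicit computation. Applying the definition of $\text{d}$ gives
\begin{align}
\text{d}\alpha = (-1)^{j-1}\frac{\partial f}{\partial x_j}\,\vec{d}_1\wedge\cdots\wedge \vec{d}_{p+1},
\end{align}
so by Fubini and the one-dimensional fundamental theorem of calculus,
\begin{align}
\int_{[0,1]^{p+1}}\text{d}\alpha = (-1)^{j-1}\int_{[0,1]^{p}}\bigl[f\bigr]_{x_j=0}^{x_j=1}\,\prod_{i\neq j}dx_i.
\end{align}
On the right side, the pullback of $\alpha$ to any face $\{x_k=\text{const}\}$ with $k\neq j$ vanishes, because $\vec{d}_k$ appears in $\alpha$ and restricts to zero there. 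Only the faces $x_j=0$ and $x_j=1$ contribute, and comparing with the induced boundary orientation reproduces exactly the same signed sum. This matches the bulk integral above, completing the local (and hence, via the partition of unity, the global) identity.

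The main obstacle is the bookkeeping of signs and orientations: one must verify that the factor $(-1)^{j-1}$ produced by the exterior derivative aligns precisely with the outward-normal convention on $\partial R$, and that the half-cube boundary charts glue consistently with this convention. Once this orientation check is in place, the remaining ingredients (partition of unity, coordinate-wise reduction, fundamental theorem of calculus) are routine.
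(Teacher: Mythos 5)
The paper does not prove this statement at all: it is quoted as a standard background result (the Generalized Stokes Theorem) with a citation to Needham's \emph{Visual Differential Geometry and Forms}, so there is no ``paper proof'' to compare against. Your proposal is the classical textbook proof --- partition of unity subordinate to interior and half-cube boundary charts, coordinate-wise reduction of the $k$-vector-valued statement to scalar forms, reduction by linearity to a single monomial, and then Fubini plus the one-dimensional fundamental theorem of calculus --- and as an outline it is correct. Three small points are worth tightening if you were to write it out in full. First, you should state the regularity hypothesis explicitly ($\vec{\alpha}\in C^1$, say), since the theorem as quoted omits it and the FTC step needs it. Second, in a boundary chart with face $\{x_{p+1}=0\}$ the terms with $j\neq p+1$ in your monomial decomposition vanish for a different reason than the $j=p+1$ term: compact support in the chart kills the evaluations at $x_j=0$ and $x_j=1$ for $j\le p$, whereas for $j=p+1$ only the $x_{p+1}=0$ endpoint survives and must be checked against the outward-normal (induced boundary) orientation --- this is exactly the sign bookkeeping you flag, and it is the only place an error could hide. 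Third, since the paper applies the theorem to polytopes (tetrahedra, cut-out volumes with corners), the relevant generality is manifolds with corners rather than smooth boundary; your half-cube charts handle this, but the gluing of orientations across faces deserves a sentence. None of these is a genuine gap.
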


\section{The continuous stress function} 
Let us cut out some volume $V$ of the material. The equilibrium of the stresses acting on its boundary surface $\partial V$ and of the body forces acting on $V$ may be expressed as
\begin{align}
\vec{0}=\int_{\partial V} \vec{\sigma}+\int_V \vec{\rho}=\int_{V} (\text{d} \vec{\sigma}+\vec{\rho})
\end{align}
which must hold for all possible $V$, thus we have
\begin{align}
\text{d} \vec{\sigma} + \vec{\rho}=\vec{0}. \label{eq:sumF}
\end{align} 
We will be able to give a stress function if there exists a potential function $\vec{\pi}$ for $\vec{\rho}$ such that $\vec{\rho}=\text{d}\vec{\pi}$, as then $\text{d} (\vec{\sigma} + \vec{\pi})=\vec{0}$ holds, implying the existence of $(n-2)$-form $\vec{\psi}$ such that $\vec{\sigma}+\vec{\pi}=\text{d}\vec{\psi}$. Neither $\vec{\psi}$ nor $\vec{\pi}$ is unique, so much so that we may prescribe some of their components to be $0$. We will treat $\vec{\psi}$ here and we will return to $\vec{\pi}$ later. It will be apparent later that we need 
\begin{align}
\psi^{i,\mathcal{P}}=0 \Leftarrow i\in \mathcal{P} \label{eq:psireq}
\end{align} where $\mathcal{P}$ is an index-set of $n-1$ elements. To see that this is possible, assume we have found $\vec{\alpha}$ such that $\vec{\sigma}+\vec{\pi}=d\vec{\alpha}$. We may create $(n-3)$-form $\lambda$ as 
\begin{align}
\lambda^{i,\mathcal{Q}}= 0  \Leftarrow i \in \mathcal{Q}\\
\lambda^{i,\mathcal{Q}}= (-1)^\tau \int \alpha^{i,(i, \mathcal{Q} )} d x_i  \Leftarrow i \notin \mathcal{Q}
\end{align}
where $\mathcal{Q}$ is an index-set of $n-2$ elements, $\tau$ is the number of permutations required to bring the index-set $(i, \mathcal{Q} )$ to lexicographic order and $dx_i$ denotes integration with respect to the $\vec{x}_i$ direction. We may now have $\vec{\psi}=\vec{\alpha}-\text{d}\vec{\lambda}$, satisfying Equation \eqref{eq:psireq} and $\vec{\sigma}+\vec{\pi}=\text{d}\vec{\psi}$ (since $\text{d}^2\vec{\lambda}=\vec{0}$). If $\vec{\sigma} \in C^1$ this is always doable, we give some examples for this below.\\

For $n=2$ both $\mathcal{P}$ and $\mathcal{Q}$ is empty and any $0$-form $\vec{\psi}$ is good. 

For $n=3$ only $\mathcal{Q}$ is empty. As an example the $\vec{x}_2$ direction of $\vec{\alpha}$ looks like
\begin{align}
\vec{\alpha}^2 \vec{x}_2=(\alpha^{2,1} \vec{d}_1+\alpha^{2,2} \vec{d}_2 +\alpha^{2,3} \vec{d}_3) \vec{x}_2  
\end{align}
the undesirable part being $\alpha^{2,2}$. By integrating it we get function $\lambda^2=\int \alpha^{2,2} dx_2$. The stress and potential components pointing in the $\vec{x}_2$ direction will be calculated as
\begin{align}
\begin{split} \text{d}\left( \left(\alpha^{2,1} -\frac{\partial \lambda^2}{\partial x_1}\right) \vec{d}_1+0 \ \vec{d}_2 +\left(\alpha^{2,3}-\frac{\partial \lambda^2}{\partial x_3}\right) \vec{d}_3 \right)= \\
-\left(\frac{\partial \alpha^{2,1}}{\partial x_2} -\frac{\partial^2 \lambda^2}{\partial x_{1}\partial x_2}\right)\vec{d}_1\vec{d}_2+\left(\frac{\partial \alpha^{2,3}}{\partial x_1}-\frac{\partial^2 \lambda^2}{\partial x_3\partial x_1}-\frac{\partial \alpha^{2,1}}{\partial x_3} +\frac{\partial^2 \lambda^2}{\partial x_1 \partial x_3}\right) \vec{d}_1\vec{d}_3 \\
 +\left( \frac{\partial \alpha^{2,3}}{\partial x_2} -\frac{\partial^2 \lambda^2}{\partial x_3 \partial x_2}\right)  \vec{d}_2\vec{d}_3  \end{split} \label{eq:sigma1}
\end{align}
which can be compared with 
\begin{align}
\text{d}\vec{\alpha}^2=\left( \left(\frac{\partial \alpha^{2,2}}{\partial x_1} -\frac{\partial \alpha^{2,1}}{\partial x_{2}}\right) \vec{d}_1\vec{d}_2+\left(\frac{\partial \alpha^{2,3}}{\partial x_1}-\frac{\partial \alpha^{2,1}}{\partial x_3}\right)\vec{d}_1\vec{d}_3
 +\left(\frac{\partial \alpha^{2,3}}{\partial x_2} -\frac{\partial \alpha^{2,2}}{\partial x_3}\right) \vec{d}_2\vec{d}_3\right). \label{eq:sigma2}
\end{align}
Equations \eqref{eq:sigma1} and \eqref{eq:sigma2} are the same if $\frac{\partial^2 \lambda^2}{\partial x_i \partial x_j}=\frac{\partial^2\lambda^2}{\partial x_j \partial x_i}$ and $\frac{\partial \alpha^{2,2}} {\partial x_j}=\frac{\partial^2 \lambda^{2} } {\partial x_j \partial x_2}$. Both are satisfied since we have $\vec{\alpha} \in C^2$ due to $\vec{\sigma}\in C^1$, for Equation \eqref{eq:sumF} to make sense.
We can see from the definition that the integration is always with respect to a single variable and is always possible without having to solve a system of differential equations.\\

The moment of the stresses acting on a small piece of hyper-surface at location $\vec{x}\in \mathbb{R}^n$ is calculated as $\vec{x} \wedge \vec{\sigma}$ (with respect to the origin of the coordinate system). Similarly the moment of the body-forces may be expressed as $\vec{x} \wedge \vec{\rho}$. Cutting out some volume $V$ of the material, the equilibrium of moments may be expressed as 
\begin{align}
\vec{0}=\int_{\partial V} \vec{x} \wedge \vec{\sigma}+\int_{V} \vec{x} \wedge \vec{\rho}
=\int_{V} \text{d}\vec{x} \wedge \vec{\sigma} + (-1)^0 \int_{V} \vec{x} \wedge \text{d}(\vec{\sigma}+\vec{\pi})\label{eq:momeneq}
\end{align}
for any volume $V$. Since We already know $\text{d} (\vec{\sigma} + \vec{\pi})=\vec{0}$, we have 
\begin{align}
\text{d}\vec{x} \wedge \vec{\sigma} =\vec{0} \label{eq:sumM}
\end{align}
(here $\text{d}\vec{x}=\sum_{i=1\dots n} 1 \ \vec{d}_i \vec{x}_i$).
At this point we have to go back to the fact that $\vec{\pi}$ is not unique and prescribe
\begin{align}
\pi^{i,\mathcal{P}}=0 \Leftarrow i\in \mathcal{P} \label{eq:pireq}
\end{align}
similarly to $\vec{\psi}$. This guarantees $\text{d}\vec{x}\wedge\vec{\pi}=\vec{0}$ and
we may conclude:
\begin{align}
\text{d}(\text{d}\vec{x}\wedge \vec{\psi})=\vec{0}\wedge \vec{\psi} - \text{d}\vec{x}\wedge \vec{\sigma} - \text{d}\vec{x}\wedge\vec{\pi}=\vec{0} \\
\exists  \ \vec{\omega}: \ \text{d}\vec{\omega}=\text{d}\vec{x}\wedge \vec{\psi}. \label{eq:conclude}
\end{align}
The question becomes: Can we reconstruct $\vec{\psi}$ from $\text{d}\vec{\omega}$ and if so how? For an arbitrary differential form $\vec{\alpha}$ the map $\vec{\alpha} \mapsto \text{d}\vec{x}\wedge \vec{\alpha}$ contains information loss, but for certain forms it is actually reversible. The idea can be seen in $\mathbb{R}^3$ with the cross product as:
\begin{align}
\colvec{3}{1}{0}{0}\times \colvec{3}{0}{y}{0} = \colvec{3}{0}{0}{y}, \quad \colvec{3}{1}{0}{0}\times \colvec{3}{0}{0}{y} = \colvec{3}{0}{y}{0}.
\end{align}
Recalling how the cross product is the combination of the wedge product and the Hodge dual, the two maps become
\begin{align}
\vec{\psi} &\mapsto \overline{\vec{\psi}}= \star(\text{d}\vec{x}\wedge \vec{\psi}) \label{eq:map1} \\
\overline{\vec{\psi}} &\mapsto \star(\text{d}\vec{x} \wedge \overline{\vec{\psi}}). \label{eq:map2}
\end{align}
For differential forms $\vec{\psi}$ satisfying the orthogonality condition $\left[ \vec{\psi} ; \text{d}\vec{x} \right]=0$
map \eqref{eq:map1} interchanges the indices as $\overline{\psi}^{\mathcal{P},i}=(-1)^n \psi^{i,\mathcal{P}}$ taking the vector valued $(n-2)$-form to a $(n-2)$-vector valued $1$-form; furthermore map \eqref{eq:map2} is the inverse of map \eqref{eq:map1}. The condition in Equation \eqref{eq:psireq} is sufficient (but not necessary) to satisfy $\left[ \vec{\psi} ; \text{d}\vec{x} \right]=0$, but getting rid of redundant parameters is useful in general, so let us parametrize the $(n-2)$-form $\vec{\omega}$ as follows: If $\vec{\alpha}=\text{d}\vec{\omega}$ then 
\begin{align}
\alpha^{ij,\mathcal{P}}=0 \Leftarrow (i\in \mathcal{P} \ \text{and} \ j\in \mathcal{P}) \label{eq:alphareq}  
\end{align}
and equivalently
\begin{align}
\omega^{ij,\mathcal{Q}}=0 \Leftarrow (i\in \mathcal{Q} \ \text{or} \ j\in \mathcal{Q}) \label{eq:omegareq}  
\end{align}
must hold. (For $n=2$ $\vec{\omega}$ is an arbitrary $0-form$.) In other words for any $\vec{x}_i\vec{x}_j$ moment component there is a single non-zero component $\omega^{ij,\mathcal{Q}}$, exactly the one where $(1 \dots n )$ is the disjoint union of $(i,j)$ and $\mathcal{Q}$.

To sum up the usage of what has been derived:
Given $\vec{\rho}$ one has to find the potential function $\vec{\pi}$ satisfying Equation \eqref{eq:pireq}. Then choose any $(n-2)$-form $\vec{\omega}$ satisfying Equation \eqref{eq:omegareq} and the boundary conditions corresponding to the problem. The stresses are determined as
\begin{align}
\vec{\sigma}=\text{d}(\star(\text{d}\vec{x}\wedge \star \text{d}\vec{\omega}))-\vec{\pi}.\label{eq:stressdef}
\end{align}

\subsection{Mechanical interpretation}
We will be able to give a mechanical interpretation in case of $\vec{\rho}=\vec{0}$ only, because in this case only does Equation \eqref{eq:momeneq} lead to $\text{d}(\vec{x} \wedge \vec{\sigma})=\vec{0}$ implying the existence of $\vec{\psi}^M$, such that $\text{d}\vec{\psi}^M=\vec{x} \wedge \vec{\sigma}$. 
With this, we may consider a structure and cut in in half along a hyper-surface, denoting the surface of the cut by $S$. Then the moments of the stresses along $S$ are given as
\begin{align}
\int_{S} \vec{x} \wedge \vec{\sigma}=\int_{\partial S} \vec{\psi}^M. \label{eq:psiM}
\end{align}
We may note, that
\begin{align}
\text{d}(\vec{x}\wedge\vec{\psi})=\text{d}\vec{x}\wedge\vec{\psi}+\vec{x}\wedge\vec{\sigma} \label{eq:interp1}\\
\vec{x}\wedge\vec{\psi}=\vec{\omega}+\vec{\psi}^M +\vec{\delta} \label{eq:interp2}
\end{align}
where $\text{d} \vec{\delta} =\vec{0}$. Thus integrating the stress function $\int_{\partial S} \vec{\omega}$ provides a "correction term" to get correct moment values from the expression $\int_{\partial S} (\vec{x}\wedge\vec{\psi})$. 
In case of $n=2$ the stress function is actually a moment-function (a $0$-form, having moment values over the plane). If the two endpoints of the cut are $\vec{u},\vec{v} \in \mathbb{R}^2$, then $\int_{\partial S} \vec{\omega}=\vec{\omega}(\vec{v})-\vec{\omega}(\vec{u})$. (See for instance Phillips \cite{phillips}.)

\subsection{Relation with earlier continuous stress functions}
Using the notation of Sadd \cite{sadd_elasticity}:
For $n=2$ the stress function is the Airy stress function, $\vec{\omega}=\phi$. Here the Hodge dual of $2$-vector $\vec{x}_1\vec{x}_2$ is a $0$-vector, that acts like a scalar. The Hodge dual of a $2$-vector valued $1$-form is a scalar valued $1$-form.\\

For $n=3$ we have given a $1$-form whose components correspond to the Maxwell stress function as $\omega^{12,3}=-\Phi^{33}$, $\omega^{13,2}=\Phi^{22}$ and $\omega^{23,1}=-\Phi^{11}$. 

One could also embed the components of the Morera stress function into $\vec{\omega}$ as $\omega^{13,1}=-\omega^{23,2}=\Phi^{12}$, $-\omega^{12,1}=\omega^{23,3}=\Phi^{13}$ and $-\omega^{12,2}=\omega^{13,3}=\Phi^{23}$ (the symmetry would be visible in cyclic and not lexicographic labelling). The symmetry of $\vec{\omega}$ would mean $\left[ \vec{\psi} ; \text{d}\vec{x} \right]=0$ would still hold and Equation \eqref{eq:stressdef} would still work. However, as $n$ grows the number of these parameters would grow faster than what is strictly necessary, hence we did not include this approach in our generalization.   

\section{The discontinuous stress function}
Rewriting the idea of Maxwell \cite{maxwell1870} in this language given a planar truss the internal force distribution may be described by a piecewise linear, $C^0$ continuous stress function $\vec{\omega}$, implying $\text{d}\vec{\psi}=\vec{\sigma}=\vec{0}$ where there is no structure (between the rods) and $\text{d}\vec{\psi}$ not being defined where there is structure. The planar case is somewhat degenerate as although $\text{d}\vec{\psi}=\vec{0}$ holds $\vec{\psi}$ cannot be the exterior derivative of anything since it is a $0$-form; regardless, it is not hard to see that $\text{d}\vec{\psi}=\vec{\sigma}=\vec{0}$ also implies the piecewise linearity of $\vec{\omega}$. (This treatment requires that the load of the structure is acting on joints as concentrated forces, $\vec{\rho}=\vec{0}$ has to hold everywhere.) Williams and McRobie proposed \cite{williams16} that one can consider planar moment-bearing frames, where the stress function need not be even $C^0$ continuous and have a well defined value at the axes of the rods. The method they proposed works in case of structures where the rod axes correspond to planar graphs, or equivalently the rod axes are edges of a planar mosaic. We will be able to actually strengthen this description to include structures with non planar-graphs, but for now we will assume the rod axes of the framework to correspond to edges (1-faces) of a convex, polyhedral $n$-dimensional mosaic.

If $n\geq 3$, the non-existence of stresses outside the rod axes imply the existence of $n-3$ form $\vec{\Psi}$ such that $\text{d}\vec{\Psi}=\vec{\psi}$. This also gives us 
\begin{align}
\text{d}(\text{d}\vec{x}\wedge\vec{\Psi})=-\text{d}\vec{\omega} \Rightarrow \vec{\omega}=\vec{\gamma}-\text{d}\vec{x}\wedge \vec{\Psi} \ : \text{d}\vec{\gamma}=\vec{0}
\end{align}
implying the existence of $n-3$-form $\vec{\Omega}$ such that $\vec{\gamma}=\text{d}\vec{\Omega}$. Returning to the potential function in Equation \eqref{eq:psiM}, we may express the moment of the stresses as
\begin{align}
\int_{\partial S} \vec{\psi}^M=\int_{\partial S} \vec{x}\wedge \vec{\psi} -\vec{\omega} =
\int_{\partial S} \vec{x} \wedge \text{d}\vec{\Psi}-( \vec{\gamma}-\text{d}\vec{x}\wedge \vec{\Psi})=
\int_{\partial  \partial S} \vec{x}\wedge\vec{\Psi} -\vec{\Omega} \label{eq:struct}
\end{align}  
where we used $\text{d}(\vec{x} \wedge \vec{\Psi})=\text{d}\vec{x} \wedge \vec{\Psi} + \vec{x} \wedge \text{d}\vec{\Psi}$. This shape would only make sense if $\vec{\psi}$ and $\vec{\omega}$ would be differentiable in the whole of $\partial S$. This will not be the case in general, we will have parts in which they are continuous, but the actual equilibrium will depend on what happens at the $C^1$ or even $C^0$ discontinuities. The use of this equation is that it tells us the shape of the stress function pieces on the continuous parts. Based on the above, when taking the discontinuous analogue of the continuous function we prescribe the following rules (not stricter than what has been done before) that will determine the discontinuous stress function:
\begin{enumerate}
\item The function need not be defined at the rod axes.
\item To every point outside the rod axes a single stress function piece has to correspond, satisfying $\text{d}\vec{\Psi}_i=\vec{\psi}_i$ and $\text{d}\vec{\psi}_i=\vec{0}$ on the entirety of $\mathbb{R}^n$.  
\end{enumerate}
We will start by looking at the $n=3$ case below, then generalize.

\subsection{The discontinuous stress function in 3 dimensions}
Consider a tetrahedral piece of material, vertices of the tetrahedron denoted by $\vec{p}_1,\vec{p}_2,\vec{p}_3,\vec{p}_4$, point $\vec{p}$ being inside the tetrahedron (Figure \ref{fig:1}). We want to replace this with 4 pieces of rods running from $\vec{p}_i$ to $\vec{p}$ connected in a force and moment-bearing way. Choosing $\vec{q}_1,\vec{q}_2,\vec{q}_3$ to be points close to $\vec{p}_1$ on the edges of the tetrahedron, the force resultant of the stresses (in the continuous case) acting on the area enclosed by lines $\vec{q}_i,\vec{q}_j$ may be calculated via line-integrals of $1$-forms $\vec{\psi}_{ij}$ on the respective lines as
\begin{align}
\vec{F}_1=\int_{\vec{q}_1}^{\vec{q}_2} \vec{\psi}_{12}+\int_{\vec{q}_2}^{\vec{q}_3} \vec{\psi}_{23}+\int_{\vec{q}_3}^{\vec{q}_1} \vec{\psi}_{31}. \label{eq:contF}
\end{align}   
Here the three $1$-forms are the same, we labelled them according to the curve-segments to introduce the logic of building up the resultant from parts.

\begin{figure*}
\centering
\includegraphics[width=1\textwidth]{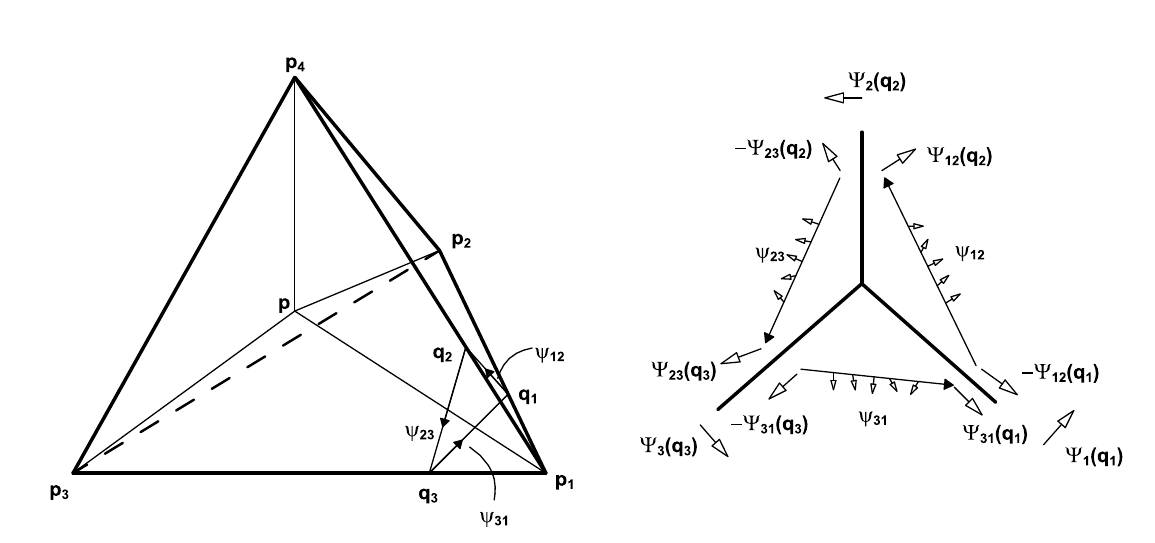}
\caption{Tetrahedral piece of material and the discontinuous version of a line integral on it. The negative signs are due to the orientation of the boundary (endpoints) of the curve-segments.}\label{fig:1}
\end{figure*} 

Distorting the geometry of the tetrahedron into the discontinuous case, we naturally get triangles $\vec{p},\vec{p}_i,\vec{p}_j$ spanning 2-faces of the mosaic where the continuity of the line-integral may break. To express the condition $\vec{\sigma}=\vec{0}$ we have to define the stress function (it has to be single-valued) in every point that is not on a 1-face (rod axis). We define a separate stress function piece corresponding to the inside of each 3-face and to the relative inside to every 2-face. (Relative inside of a $k$-face: points of the $k$-face that are not elements of a $j$-face such that $j<k$.) We don't really have a choice here, we cannot merge the pieces corresponding to the 2-faces since the 1-faces separate them. We cannot merge the pieces corresponding to the 3-faces with each other since the 2-faces separate them. If we merge the pieces of the 2-faces with the pieces of the 3-faces we get a single component that captures nothing of the structure. Thus the discrete version of Equation \eqref{eq:contF} will be
\begin{align}
\vec{F}_1=\vec{\Psi}_{12}(\vec{q}_2)-\vec{\Psi}_{12}(\vec{q}_1)+ \vec{\Psi}_{23}(\vec{q}_3)-\vec{\Psi}_{23}(\vec{q}_2)+\vec{\Psi}_{31}(\vec{q}_3)-\vec{\Psi}_{31}(\vec{q}_1)+\sum_{i=1\dots 3} \pm  \vec{\Psi}_i(\vec{q}_i) \label{eq:discFint}
\end{align}
where $\vec{\Psi}_{ij}$ are the potential functions as introduced above, while $\pm \vec{\Psi}_i$ are the discrete, orientation sensitive jumps corresponding to the line-integral passing through the 2-faces of the mosaic. (Here every unique label may denote a unique form. We avoid defining $\vec{\Psi}_i$ through the Dirac-delta function since we believe this to be a technicality.) Observing that the path-integral runs in a stress-free region since all the stresses are concentrated to the rod axis we can freely perturb the path without changing the resultant. Let us parametrize $\vec{q}_i \rightarrow \vec{p}_1$ with $\tilde{\vec{q}}_i:=(1-t) \vec{q}_i+t \vec{p}_1$ where $t\in [0,1)$. We don't allow $t=1$ since we want to integrate on an open path of non-zero length. As $\vec{\Psi}_{ij}$ are defined and are differentiable at the rod axes we may have 
\begin{align}
\vec{\Psi}_{ij}(\tilde{\vec{q}}_i)=\vec{\Psi}_{ij}(\vec{p}_1)+\vec{\epsilon}_i(t)
\end{align}
where the error term satisfies $\vec{\epsilon}_i(t) \rightarrow 0$ as $t\rightarrow 1$. Using this we may have
\begin{align}
\vec{\Psi}_{ij}(\vec{q}_j)-\vec{\Psi}_{ij}(\vec{q}_i) = \vec{\Psi}_{ij}(\tilde{\vec{q}}_j)-\vec{\Psi}_{ij}(\tilde{\vec{q}}_i)=\vec{\Psi}_{ij}(\vec{p}_1)-\vec{\Psi}_{ij}(\vec{p}_1)+\vec{\epsilon}_j(t)-\vec{\epsilon}_i(t)
\end{align}
where the error term $\vec{\epsilon}_j(t)-\vec{\epsilon}_i(t)$ can be arbitrarily small. Thus we conclude
\begin{align}
\vec{\Psi}_{ij}(\vec{q}_j)-\vec{\Psi}_{ij}(\vec{q}_i) = \vec{0}
\end{align}
must hold and $\vec{\Psi}_{ij}$ cannot be used to parametrise the discontinuous case. We set these components to $\vec{0}$ on the basis that otherwise they would give us useless parameters. To properly treat the signs of $\pm  \vec{\Psi}_i(\vec{q}_i)$ we assign an orientation to each 2-face, which can be captured by the normal vector $\vec{n}_i$ (at the point of intersection with the path of integration). Let $\vec{t}_i$ denote the tangent vector of the path of integration, at the point of intersection with the 2-face. The force in the bar (acting on the rod-star that the tetrahedron becomes, expressed in the global frame) will be 
\begin{align}
\vec{F}_1=\sum_{i=1\dots 3} \textit{sign}(\scal{\vec{n}_i}{\vec{t}_i})\vec{\Psi}_i(\vec{p}_1)\label{eq:discF}
\end{align}
where $\textit{sign}( \ )$ denotes signum function and $\scal{\vec{n}_i}{\vec{t}_i}$ is the usual scalar product. The requirement of sign consistency is not the distribution of $\vec{n}_i$ but the fact that one chooses one and sticks to it through all the calculations.

With this established, we may observe that in the discontinuous case we actually want $\vec{\Psi}_i$ to be constants, since we want the line-integral around the rod-axis to give the same force resultant even if the exact path is perturbed in the stress-free region. 

We may similarly write up the moment of the force system with respect to the origin as 
\begin{align}
\vec{M}_1= \sum_{i=1\ldots 3}  \left(\vec{q}_i \wedge \vec{\Psi}_i(\vec{q}_i) - \vec{\Omega}_i(\vec{q}_i) \right)=\vec{p}_1\wedge \vec{F}_1 -\sum_{i=1\ldots 3} \vec{\Omega}_i(\vec{p}_1). \label{eq:discM}
\end{align}
The moment of the force system at $\vec{p}_1$ with respect to $\vec{p}_1$ is $\vec{M}_1-\vec{p}_1\wedge \vec{F}_1=-\sum_{i=1\ldots 3} \vec{\Omega}_i(\vec{p}_1)$.\\

Thus to each 2-face corresponds a force system, and the stress function $\vec{\Omega}_i$ is $-1$ times the moment of the force system with respect to the points in the 2-face. As a consequence
\begin{align}
d\vec{\Omega}_i=d\vec{x} \wedge \vec{\Psi}_i
\end{align}
holds (which is the discrete analogue of Equation \eqref{eq:conclude}) and the force components stored in $\vec{\Psi}_i$ may be restored from the derivative, for instance in the way introduced in the continuous case. The sign sensitive sum can then be computed.

Another way of looking at this is that any force system may be expressed as the vector pair $(\vec{F}_i,\vec{M}_i)$ (the moment is again taken with respect to the origin.) The stress function is $\vec{\Omega}_i(\vec{x})=\vec{x} \wedge \vec{F}_i-\vec{M}_i$ which makes sense on the whole of $\mathbb{R}^3$. The internal forces in the rod admit this decomposition as well. Thus Equations \eqref{eq:discF} and \eqref{eq:discM} can be expressed together as
\begin{align}
(\vec{F}_1,\vec{M}_1)=\sum_{i=1\ldots 3} \textit{sign}(\scal{\vec{n}_i}{\vec{t}_i})(\vec{F}_i,\vec{M}_i)\label{eq:disc_COMB}
\end{align} 
which is just a 6-dimensional vectorial sum.

\subsection{Dimension-independent generalization}
Given an $n$-dimensional mosaic with its polyhedral framework we have to decide to which elements of it do we attach a corresponding stress function piece. If $n>3$ we actually have a choice in how to do this. Let us start by considering $n=4$, where we can define pieces to each 2,3,4-face and try to merge them to minimize the components used. We cannot merge the components corresponding to the relative insides of 2-faces with each other (without involving higher dimensional faces) since the 1-faces separate them. We can however merge the pieces corresponding to the insides of 3- and 4-faces to a single piece. This is the case since in $n=4$ the intersection of two 4-faces is a 3-face (or the empty set), and the internal points of the 3-face will be on the boundary of both 4-faces. In other words a 2 dimensional plane does not separate the 4 dimensional space, similarly how a line does not separate 3 dimensional space. With this we will have $\vec{\Psi}_i$ and $\vec{\Omega}_i$ corresponding to the two-faces as in case of $n=3$ and $\tilde{\vec{\psi}}$ corresponding to the merged $k$-faces ($k>2$). 

\begin{figure*}
\centering
\includegraphics[width=0.5\textwidth]{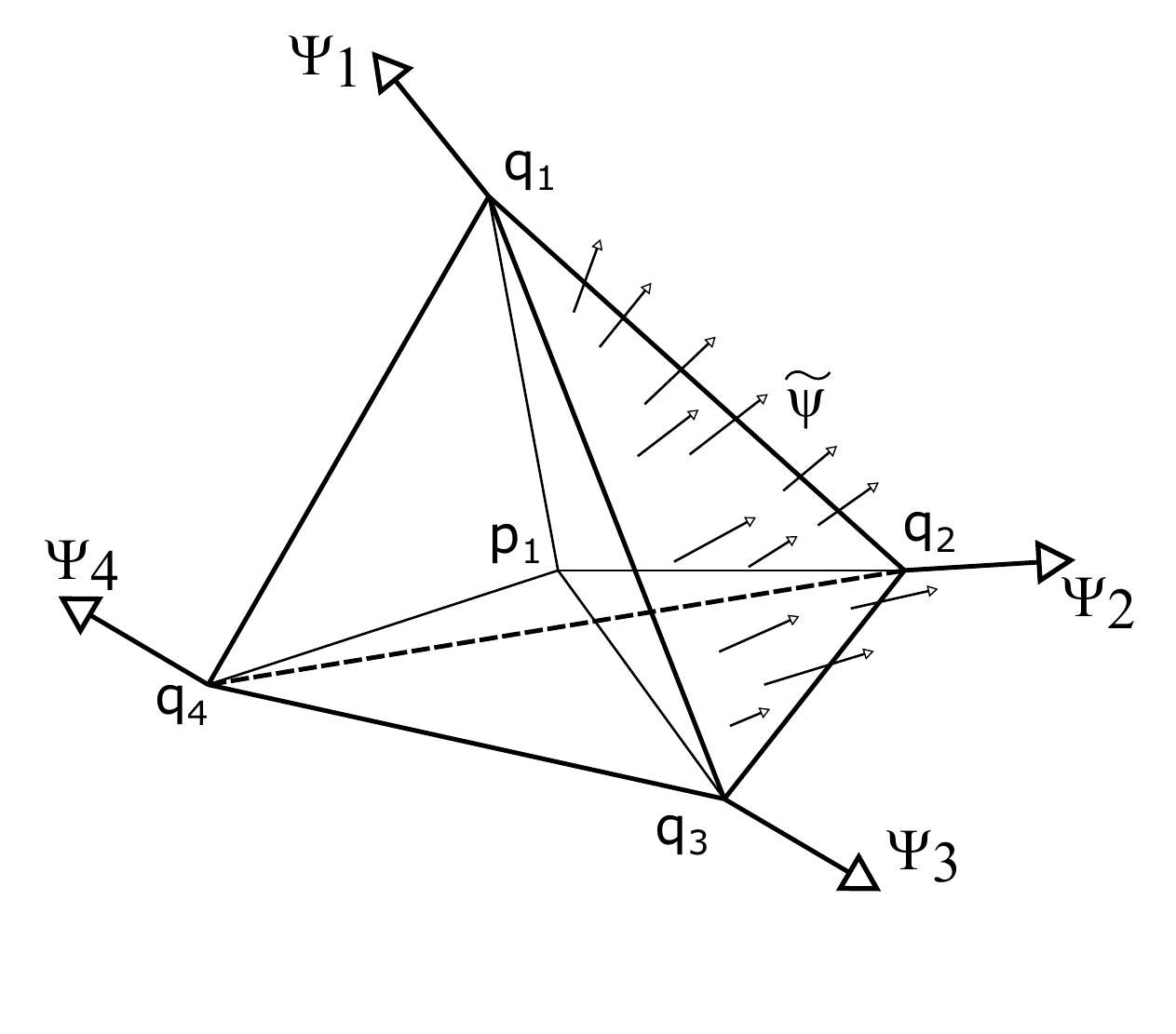}
\caption{Decomposition of the surface integral in case of $n=4$. The integration is carried out on a tetrahedron containing point $\vec{p}_1$ as an internal point. Components $\vec{\Psi}_i$ correspond to the respective vertices $\vec{q}_i$ of the tetrahedron while $\tilde{\vec{\psi}}$ is a surface integral to be evaluated everywhere except at the vertices. }\label{fig:2}
\end{figure*}

We can have a similar set-up as in Figure \ref{fig:1}, which lead to a line integral along a 2 dimensional simplex. In $n=4$ we need to integrate along the boundary surface of a 3 dimensional simplex, that is a tetrahedron. Let us denote the tetrahedron with $S$, the point on the rod axis inside it with $\vec{p}_1$ and the vertices of the tetrahedron with $\vec{q}_1 \dots \vec{q}_4$ (see Figure \ref{fig:2}). The intersection of $S$ with the two-faces of the mosaic are line segments $\overline{\vec{p}_1\vec{q}_i}$. The resultant force acting on the rod may be expressed as
\begin{align}
\vec{F}_1=\int_{\partial S \setminus \{\vec{q}_1,\vec{q}_2,\vec{q}_3,\vec{q}_4\} } \tilde{\vec{\psi}}+\sum_{i=1\dots 4} \pm  \vec{\Psi}_i(\vec{q}_i).\label{eq:discFintn}
\end{align}
Since we have $d\tilde{\vec{\psi}}=\vec{0}$ we know $\tilde{\vec{\psi}}$ is finite in all points and a definite integral involving it does not change by removing a finite number of points from the domain. Thus
\begin{align}
\int_{\partial S \setminus \{\vec{q}_1,\vec{q}_2,\vec{q}_3,\vec{q}_4\} } \tilde{\vec{\psi}} =\int_{\partial S} \tilde{\vec{\psi}} =\vec{0}
\end{align}
and we again have a situation where the resultant is determined by the function components corresponding to the 2-faces and we set $\tilde{\vec{\psi}} =\vec{0}$. This argument of merging pieces to get rid of anything but the pieces of the 2-faces works in any $n>3$. Everything will work similarly to Equation \eqref{eq:disc_COMB}, but we do have to generalize the sign convention. In general this may be done by choosing an orientation on each 2-face. If we pick point $\vec{c}_i$ in it and place $\vec{b}_{i,1},\vec{b}_{i,2}$ orthonormal base vectors there, the positive direction of rotation in the plane is represented by $\vec{b}_{i,1} \wedge \vec{b}_{i,2}$ (see Figure \ref{fig:6}). If we cut out an element of the rod, it will have two outwards pointing axial vectors $\vec{a}$ and $-\vec{a}$. When determining the internal force system at the endpoints, the stress function will appear with positive sign if the respective axial vector $\pm \vec{a}$ causes a positive directional rotation around $\vec{c}_i$.  If we wish to express the internal forces of the structure at point $\vec{r}$ with outward normal $\vec{a}$ we may formally do this as
\begin{align}
(\vec{F},\vec{M})=\sum_{i=1\ldots m} \textit{sign}(\left[ (\vec{r}-\vec{c}_i) \wedge \vec{a} ; \vec{b}_{i,1}\wedge \vec{b}_{i,2} \right]  )(\vec{F}_i,\vec{M}_i)\label{eq:disc_indep}
\end{align} 
where the bar is involved in $m$ 2-faces. 

\begin{figure}
\centering
\includegraphics[width=0.75\columnwidth]{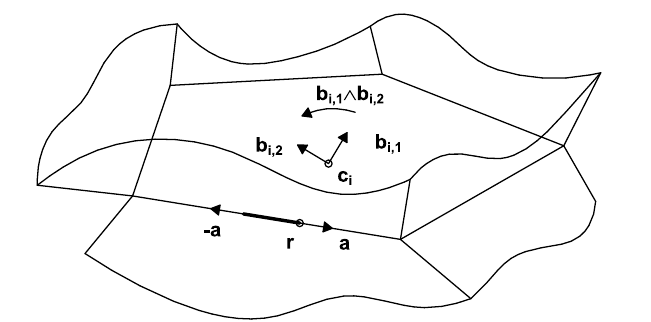}
\caption{Orientation of a 2-face of a mosaic may be expressed via the exterior product of two vectors in the mosaic-face.  This implies a direction of traverse on it's boundary-loop.}\label{fig:6}
\end{figure}

\subsection{Topological generalization, mechanical interpretation}
At this point we can also consider what to do with non polyhedral frames. One effect of the orientation on the 2-face is that it can be identified with a direction of traverse of its boundary-loop. This loop exists if the 2-face is non planar, or the connectivity of the structure is more complicated. In the continuous case typically there is material in all points in the domain and there is static indeterminacy "everywhere". We define the continuous stress functions correspondingly "everywhere". For space frames the static indeterminacy is tied to the loops in the structure: one can not solve the static problem because the loop has no free end to start determining internal forces from. Thus we should define one stress function piece inside each "independent" loop, each loop introducing $n+{n \choose 2}$ static indeterminacies. Thus we arrived at the observation that this is a combinatorical-linear algebra problem, where the combinatorical properties are determined by the topology of the structure.

These "independent" loops may be rigorously given by both algebraic topology and graph theory. The algebraic topological approach treats these loops as generators of the fundamental group of the graph of the structure, see for instance section $1.2$ in the Algebraic Topology book of Hatcher \cite{hatcher2002algebraic}. The graph theoretical approach treats these loops as generators of the cycle-space of the graph, see for instance section $1.9$ in the Graph Theory book of Diestel \cite{diestel2017graph}. 

The method given here can incorporate planar problems if they are embedded in at least $\mathbb{R}^3$, with potential functions $\vec{\Omega}$ and $\vec{\Psi}$ having some constant $0$ components. Furthermore, this description strengthens the existing planar one as non-planar graphs can also be computed this way. This will be seen in the proofs below. Before we present these proofs we take a look at the usage of what we derived.  

\subsection{Usage}
The original idea of Maxwell turned loads and support reactions into internal forces by adding fictitious bars and at least one joint representing the "world" outside the structure, where the loads and support reactions meet. This way the equilibrium of the entire structure is expressed by the equilibrium of the added joint(s). We will refer to the larger structure created this way as the \emph{extended structure}, to the non-extended one as the \emph{original structure}. Since loads and support forces of the original structure will have to be sums of stress function pieces, this imposes conditions on the stress function pieces which can be expressed as a system of linear equations. In case the original structure is statically determinate there is a single solution to this system of linear equations. If the original structure is statically indeterminate we have to find the actual solution the structure chooses, depending on its geometry and material properties. If linear elastic materials and small displacements are assumed one possible solution strategy may be quadratic programming, to which we give a (3 dimensional) example below.\\

\subsubsection{Example}
Consider a tetrahedral frame, with moment bearing joints at points $\vec{p}_1, \vec{p}_2, \vec{p}_3, \vec{p}_4$! Let us load it with a concentrated force $\vec{L}\in \mathbb{R}^3$ at point $\vec{p}_1$! Support it with pinned supports at $\vec{p}_2,\vec{p}_3$ and a roller at $\vec{p}_4$ allowing only $\vec{x}_3$ directional force. The structure is drawn in the top of Figure \ref{fig:4}. We will argue using the graph of the structure, joints at $\vec{p}_i$ will correspond to graph vertices $v_i$ while a bar between $\vec{p}_i$ and $\vec{p}_j$ will correspond to edge $\{v_i,v_j\}$. To account for the loads and supports, we extend the graph by adding vertex $v_0$ and additional graph edges $\{v_0, v_i\}$ representing the load and the supporting forces. This is drawn in the lower part of Figure \ref{fig:4}.\\

There are 10 loops in the graph, as follows:
\begin{align*}
\text{Loop 1:} \quad v_0 \rightarrow v_1 \rightarrow v_2 \rightarrow v_0\\
\text{Loop 2:} \quad v_0 \rightarrow v_1 \rightarrow v_3 \rightarrow v_0 \\
\text{Loop 3:} \quad v_0 \rightarrow v_1 \rightarrow v_4 \rightarrow v_0 \\
\text{Loop 4:} \quad v_0 \rightarrow v_2 \rightarrow v_3 \rightarrow v_0 \\
\text{Loop 5:} \quad v_0 \rightarrow v_2 \rightarrow v_4 \rightarrow v_0 \\
\text{Loop 6:} \quad v_0 \rightarrow v_3 \rightarrow v_4 \rightarrow v_0\\
\text{Loop 7:} \quad v_1 \rightarrow v_2 \rightarrow v_3 \rightarrow v_1 \\
\text{Loop 8:} \quad v_1 \rightarrow v_2 \rightarrow v_3 \rightarrow v_1 \\
\text{Loop 9:} \quad v_1 \rightarrow v_3 \rightarrow v_4 \rightarrow v_1 \\
\text{Loop 10:} \quad v_2 \rightarrow v_3 \rightarrow v_4 \rightarrow v_2. 
\end{align*}
To each loop corresponds a stress function piece $\vec{\Psi}_i=(\vec{F}_i,\vec{M}_i)$, $i=1\ldots10$, the coordinates of which will be our unknowns. We will adopt the notation that whenever referring to a force system present in a bar, we express the coordinates of the force system that acts on the joint with the smaller index. A stress function piece contributing to a bar force will appear with positive sign if the corresponding loop traverses the vertices of the edge corresponding to the bar in ascending order. As the load corresponds to edge $\{v_0, v_1\}$, the prescribed load turns into a condition on the stress function pieces as
\begin{align}
-(\vec{L},\vec{p}_1 \wedge \vec{L})=\sum_{i=1}^3(\vec{F}_i,\vec{M}_i).  \label{eq:linstart} 
\end{align} 

The condition that there are pinned supports at $\vec{p}_2$ and $\vec{p}_3$ mean the moment and force components satisfy
\begin{align}
-\vec{M}_1+\vec{M}_4+\vec{M}_5=\vec{p}_2\wedge(-\vec{F}_1+\vec{F}_4+\vec{F}_5) \\
-\vec{M}_2+\vec{M}_4+\vec{M}_6=\vec{p}_3\wedge(-\vec{F}_2+\vec{F}_4+\vec{F}_6).
\end{align}
Finally, the roller support at $\vec{p}_4$ means
\begin{align}
-\vec{M}_3-\vec{M}_5-\vec{M}_6=\vec{p}_4\wedge(-\vec{F}_3-\vec{F}_5-\vec{F}_6)\\
-F^1_3-F^1_5-F^1_6=0\\
-F^2_3-F^2_5-F^2_6=0 \label{eq:linend}
\end{align}
where the last two equations represent directional constraint of the roller.
Equations \eqref{eq:linstart} - \eqref{eq:linend} may be collected as system of linear equations in the shape of
\begin{align}
\vec{C}\vec{y}=\vec{b}
\end{align}
where $\vec{y}$ contains the unknowns $\vec{F}_i$ and $\vec{M}_i$, $\vec{C}$ is the coefficient matrix and $\vec{b}$ contains the effect of the load.

The elastic deformational energy stored in the bar between $\vec{p}_j$ and $\vec{p}_k$ may expressed from the $(\vec{F}_i,\vec{M}_i)$-shaped dynames with the help of a $6$-by $6$ matrix \cite{Livesley}, which may also be used to express this energy as a function of stress function coordinates. Rearranging these equations we may write up the total elastic deformational energy in the form $E=\sum_{j,k} E_{j,k}=\frac{1}{2}\vec{y}^T \vec{Q} \vec{y}$ ($k>j$) where $\vec{Q}=\vec{Q}^T$ and arrive at a quadratic programming problem in the form
\begin{align*}
\text{minmize:}& \quad \frac{1}{2}\vec{y}^T \vec{Q} \vec{y}  \\
\text{under constraint:}&\quad \vec{C}\vec{y}=\vec{b}
\end{align*}
by relying on the Principle of the Minimum of Complementary Potential Energy.

\begin{figure*}
\centering
\includegraphics[width=0.75\textwidth]{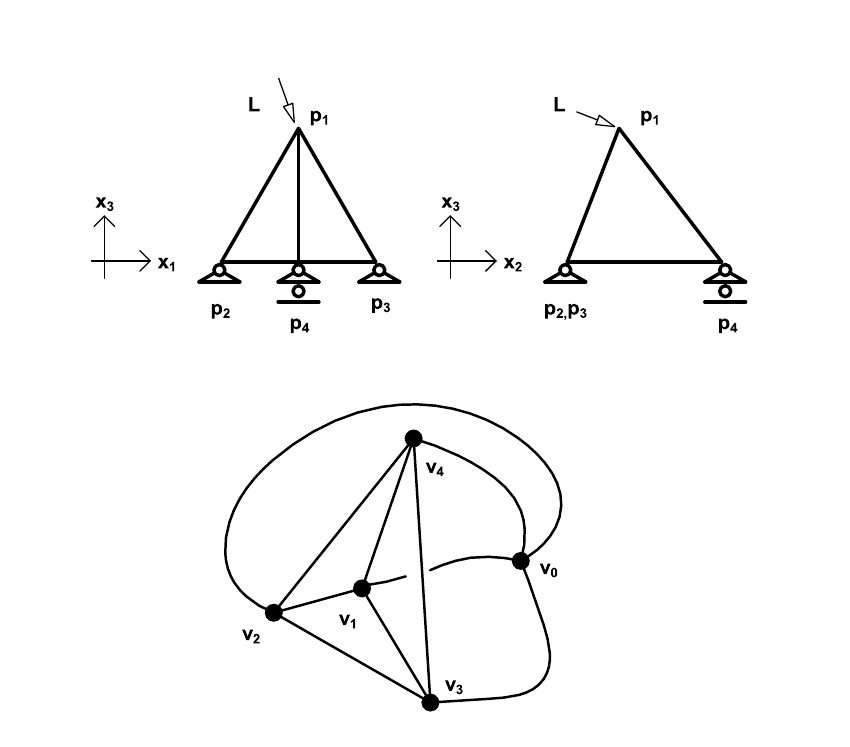}
\caption{An example to handle loads and supports as fictitious internal bars. The world outside the structure is represented by vertex $v_0$, the equilibrium of the loads and supports correspond to the equilibrium of vertex $v_0$.}\label{fig:4}
\end{figure*} 

\subsection{Formal proofs}
Unsurprisingly, since we derived the continuous stress function from the equilibrium equations and the discontinuous stress function from the continuous one, the discontinuous stress function is equivalent with the static equilibrium of the extended structure. We will show this in two steps. First we show that each internal stress distribution the stress function gives is automatically in equilibrium, then we show that any solution of static equations can be represented this way by a stress function that is unique (up to the choice of the global coordinate system).

\subsubsection{Automatic equilibrium}

\begin{theorem}
The proposed discontinuous stress function gives internal force systems that are in static equilibrium.
\end{theorem}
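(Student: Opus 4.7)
The plan is to show that at every joint $\vec{p}$ of the extended structure, the sum of the force systems transmitted across all incident bar-cuts vanishes. By Equation \eqref{eq:disc_indep}, the force system at a cut in a bar near $\vec{p}$ is a signed sum of the dynames $(\vec{F}_i,\vec{M}_i)$ associated with the 2-faces (independent loops) containing that bar. The key step will be to interchange the order of summation: the resultant at $\vec{p}$ is regrouped as a sum, over 2-faces touching $\vec{p}$, of the contributions each 2-face makes through all bars at $\vec{p}$ on its boundary loop. Since an oriented loop passes through $\vec{p}$ exactly once, precisely two bars incident to $\vec{p}$ belong to that loop: the bar on which the loop arrives and the bar on which it departs.

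The heart of the proof is then to show that these two contributions of $(\vec{F}_i,\vec{M}_i)$ appear with opposite signs in the sum \eqref{eq:disc_indep}, so that each 2-face contributes zero to the balance at $\vec{p}$. Fix a 2-face with orientation $\vec{b}_{i,1}\wedge\vec{b}_{i,2}$ and reference point $\vec{c}_i$, and cut the two adjacent bars close to $\vec{p}$ with outward normals $\vec{a}$ pointing away from $\vec{p}$. These two outward normals are the two tangent directions of the loop at $\vec{p}$, and relative to the cyclic orientation one agrees with the direction of traverse while the other opposes it. Evaluating $\left[(\vec{r}-\vec{c}_i)\wedge\vec{a}\,;\,\vec{b}_{i,1}\wedge\vec{b}_{i,2}\right]$ at the two cut points, the bivector $(\vec{r}-\vec{c}_i)\wedge\vec{a}$ projected into the plane of the 2-face flips orientation between them, so the two signs are opposite and the dyname $(\vec{F}_i,\vec{M}_i)$ cancels from the resultant at $\vec{p}$.

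Summing over all 2-faces touching $\vec{p}$ then gives a vanishing net dyname. Because $(\vec{F}_i,\vec{M}_i)$ encodes force and moment simultaneously as a $6$-dimensional vector, both translational and rotational equilibrium follow at once from the same sign cancellation; in particular the fact that $\vec{M}_i$ is expressed about the origin rather than about $\vec{p}$ is irrelevant, since a zero force system is zero about every reference point. The argument should be stated once in the polyhedral-mosaic setting, and then carried over verbatim to the topological generalization by replacing ``2-face'' with ``generator of the cycle space.''

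The principal obstacle I anticipate is the sign bookkeeping in the topological case, where loops are not necessarily boundaries of embedded $2$-faces and where several loops may share edges. One must check that the ``incoming'' and ``outgoing'' bars at $\vec{p}$ within a loop are unambiguously defined once a cyclic orientation is chosen, and that the pairing argument above survives when the loop is only a combinatorial cycle of the graph rather than the geometric boundary of a surface. A second, minor point to verify is that the same cancellation holds at the auxiliary vertex $v_0$, so that the equilibrium at $v_0$ is precisely the global equilibrium of loads and support reactions and no inconsistency is introduced by the extension procedure.
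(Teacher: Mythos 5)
Your proposal is correct and follows essentially the same route as the paper: at each joint you reindex the sum of the incident bar-end dynames by loops, observe that each loop enters and leaves the joint exactly once so its dyname $(\vec{F}_i,\vec{M}_i)$ appears twice with opposite signs, and conclude force and moment equilibrium simultaneously. The paper's proof simply asserts the opposite-sign claim that you propose to verify from Equation \eqref{eq:disc_indep}, so your additional sign bookkeeping (and the remarks on the vertex $v_0$ and on non-embedded loops) only makes the same argument more explicit.
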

\begin{proof}
Consider a joint of the structure where $j$ rods meet, and consider the corresponding vertex of the graph of the structure. Each loop of the graph that travels through the vertex enters on one graph-edge and exits on another. Thus when summing up the force resultants at the ends of the rods acting on the joint, each stress function component will be present twice, with opposite signs. For the equilibrium of forces we have (after some rearrangement)
\begin{align}
\sum_j \vec{F}_j=\sum_k (\vec{\Psi}_k-\vec{\Psi}_k)=\vec{0}
\end{align} 
where $k$ denotes the number of loops passing through the vertex.
Similarly, we can write up the sum of the moments with respect to $\vec{p}$, which after some rearrangement will take the shape of
\begin{align}
\sum_j \vec{M}_j(p)=\sum_k (\vec{\Psi}_k(\vec{p})-\vec{\Psi}_k(\vec{p}))=\vec{0}
\end{align} 
with the same indices, completing the proof.
\end{proof}  

\subsubsection{Completeness and minimality}
We will argue using the extended structure. (This is in contrast to the completeness investigation of continuous stress functions \cite{maxwell1870,morera1892soluzione}, where the question of completeness can not be investigated without considering what the boundary of the solid is like \cite{Gurtin1973,rostamian1979maxwell}. Here we don't prescribe boundary conditions as they would only exclude certain loads and we are interested in parametrizing the general case.)
\begin{theorem}\label{thm:disc_complete}
The proposed discontinuous stress function is a complete and minimal parametrization of the internal forces of the extended structure.
\end{theorem}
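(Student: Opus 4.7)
The plan is to prove the theorem by a dimension count combined with an injectivity argument based on a fundamental-cycle basis, using the previous theorem (automatic equilibrium) to identify the codomain of the parametrization map.

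First I set up the counting. Let the extended graph have $V$ vertices and $E$ edges, and set $d = n + \binom{n}{2}$, the dimension of a single dyname $(\vec{F},\vec{M})$ in $\mathbb{R}^n$. I fix a spanning tree $T$ of the extended graph and take as a basis of the cycle space the $L = E - V + 1$ fundamental cycles, one per non-tree edge $e_i$. With this basis the parameter space of stress function pieces $\{\vec{\Psi}_i\}_{i=1}^{L}$ has dimension $Ld$. Denote by $\Phi$ the parametrization map sending $\{\vec{\Psi}_i\}$ to the collection of bar dynames. By the preceding automatic-equilibrium theorem, $\Phi$ lands inside the equilibrium subspace $\mathcal{E}\subset \mathbb{R}^{Ed}$ cut out by the joint equilibrium equations.

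Next I compute $\dim\mathcal{E}$. Joint equilibrium at each vertex is a block of $d$ linear equations, and the full constraint operator is the signed incidence matrix of the extended graph tensored with $I_d$. Since the extended graph is connected, its signed incidence matrix has rank $V-1$; the left null space is spanned by the all-ones vertex vector, whose tensoring with $I_d$ produces the $d$-dimensional family of global force and moment identities (the only dependencies among the joint equations). Hence the constraint rank is $(V-1)d$ and $\dim\mathcal{E} = Ed - (V-1)d = Ld$, matching the parameter space.

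It then suffices to prove injectivity of $\Phi$. By construction, the non-tree edge $e_i$ lies in the single fundamental cycle it generates, so the dyname on bar $e_i$ equals $\pm\vec{\Psi}_i$. If $\Phi(\{\vec{\Psi}_i\})$ vanishes on every bar, reading off the non-tree edges gives $\vec{\Psi}_i = \vec{0}$ for all $i$. Hence $\Phi$ is injective, which is minimality; and an injection between finite-dimensional spaces of equal dimension is a bijection, yielding completeness.

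The main obstacle I expect is the rank computation for $\mathcal{E}$, specifically ruling out dependencies among the joint equations beyond the $d$ obvious global ones. This reduces to the classical fact that the signed incidence matrix of a connected graph has rank $V-1$: any vanishing linear combination of joint equations assigns a coefficient vector to each vertex such that both endpoints of every edge receive identical coefficients, forcing the labelling to be constant on the connected extended graph and leaving precisely the $d$-parameter family of global-equilibrium identities. A small secondary point to verify is that the moments at the two ends of a bar indeed contribute oppositely under the chosen sign convention in Equation \eqref{eq:disc_indep}, so that the constraint operator is genuinely the graph incidence matrix tensored with $I_d$ and not something subtler.
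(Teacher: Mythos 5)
Your proof is correct, but it takes a genuinely different route from the paper's. The paper argues constructively: it orders the generator loops, picks one bar per loop, contracts that loop in the graph, and thereby produces an upper-triangular linear system with $\pm 1$ on the diagonal relating the chosen bar-force components to the stress-function components; invertibility gives existence and uniqueness, and a separate backward pass through the contractions is needed to verify that the bars \emph{not} chosen also come out right (i.e.\ that the construction is consistent with the full equilibrium force distribution). Your argument replaces all of this with a spanning-tree fundamental-cycle basis plus a dimension count: injectivity is immediate because each non-tree edge lies in exactly one fundamental cycle, and surjectivity onto the equilibrium subspace $\mathcal{E}$ follows from $\dim \mathcal{E} = (E-V+1)d = Ld$ via the rank-$(V-1)$ property of the connected incidence matrix, which silently disposes of the consistency check the paper must do by hand. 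What you lose is an explicit algorithm for recovering the stress function from a given force distribution (the paper's triangular system is directly usable computationally), and you should note two small points: first, the theorem is stated for an arbitrary generating set of cycle-space loops, so you need the one-line remark that a change of cycle basis acts by an invertible linear map on the parameters, transporting your conclusion from the fundamental-cycle basis to any other; second, your flagged verification that moment equilibrium at a joint, taken about the joint's own position, reduces to the signed sum of origin-referred dynames does go through (since $\sum_j \vec{p}\wedge\vec{F}_j = \vec{p}\wedge\sum_j\vec{F}_j = \vec{0}$ once force equilibrium holds), so the constraint operator really is the incidence matrix tensored with $I_d$. With those two remarks added, your proof is complete and arguably tighter than the paper's.
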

\begin{proof}
We have to show that whatever internal force distribution that is in equilibrium is given in the extended structure, there is a unique corresponding stress function built from the appropriate components. Recall how the force components can be calculated through component-wise summation (Equation \eqref{eq:disc_indep}), where the topology of the structure determines the summations. We will manipulate a graph that will act as a topological aid to write up the correct equations. The starting shape of this graph will be the graph of the extended structure.  We will calculate the stress function coordinates component-wise, row-by row. Let $g_i$ denote the $\vec{x}_1$ directional force component of each stress function piece, where $i$ runs on all the generator-loops of the cycle space of the graph of the extended structure. Let $f_k$ denote the $\vec{x}_1$ directional component of the bar-force in bar $k$ ($k$ runs on all the bars). For each $i$ in ascending order we may do the following:

Find the loop in the graph corresponding to $\vec{\Psi}_i$ (Figure \ref{fig:3}, left). Choose any bar $k$ in the loop and express $f_k$ as
\begin{align}
f_k=\pm g_i+ \sum_{j\in \mathcal{J}_i} \pm g_j
\end{align}
where $\mathcal{J}_i$ is some index set. After this equation is written up contract the loop in the graph, unifying the involved vertices to a single one. By contracting the loop we make sure not to use component $g_i$ again. This way the index set $\mathcal{J}_i$ will satisfy: $j\in \mathcal{J}_i \implies j>i$.

After we do this for all $i$ we get a system of linear equations $\vec{A}\vec{g}=\vec{f}$, whose coefficient matrix $\vec{A}$ is square, upper triangular and each element on the main diagonal is $\pm 1$. The determinant of this matrix is $\pm 1$ (the product of the elements on the main diagonal), thus it is invertible and we may solve for $\vec{g}$. We may also repeat the whole procedure for the other components of $(\vec{F}_i,\vec{M}_i)$, in total $n+{n \choose 2}$  times.

As such we may find a suitable stress function distribution to any force system in the structure, that is in equilibrium. We may also note that the number of loops equals $n+{n \choose 2}$  the degree of static indeterminacy of the extended structure, implying that in the general case of a frame we can not get away with less stress function parameters. 

We still have to see, that choosing different bars in each loop does not give a different stress-distribution, or in other words $\vec{f}$ contains force components that parametrize the self-stresses of the structure. This can be seen by doing the loop-contraction procedure backwards. At each backwards-step  $f_k$ may be used as a parameter and the rest of the unknowns in the step may be calculated from the static equilibrium equations (see Figure \ref{fig:5}). At each backwards-step the number of added nodes equals the number of unknown components and there is an independent equilibrium equation corresponding to each joint of the structure (corresponding to each vertex of the graph). The mechanical interpretation of the equilibrium equation of a fictitious vertex (one that contains a loop contracted into it) is the sum of all the equilibrium equations of the joints that were present in the loop. As the procedure restores the internal force distribution of the structure, the proof is concluded.
\end{proof}

\begin{figure*}
\centering
\includegraphics[width=0.8\textwidth]{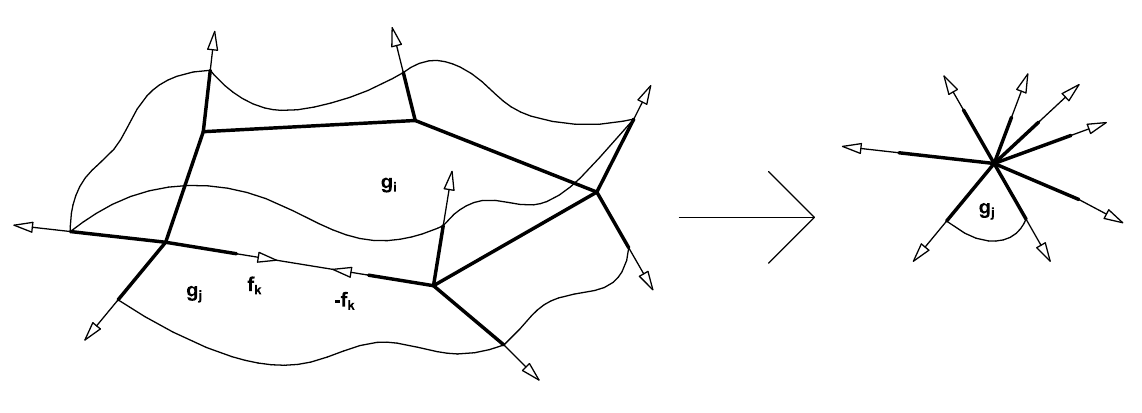}
\caption{Contracting loop $i$ corresponding to equation $i$ in the process of determining the stress function corresponding to a given force-distribution.}\label{fig:3}
\end{figure*}

\begin{figure*}
\centering
\includegraphics[width=0.8\textwidth]{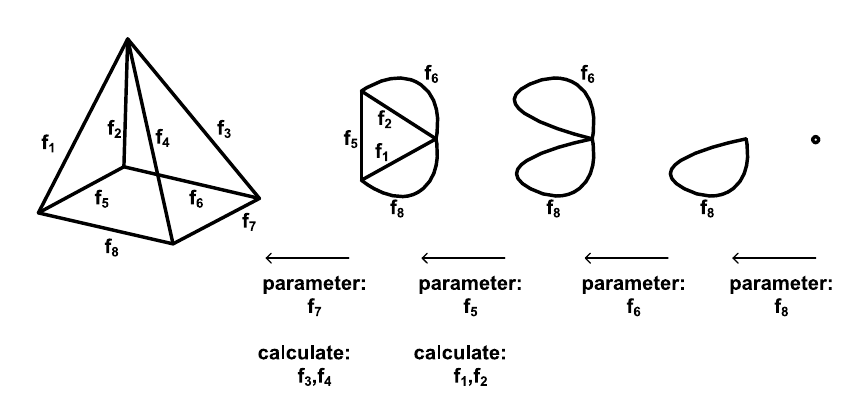}
\caption{Doing the loop-contraction procedure backwards to determine non-parameter rod forces.}\label{fig:5}
\end{figure*}

\section{Conclusion}
Motivated by the previously open problem of finding a complete three dimensional discontinuous stress function we investigated the subject of stress functions in a systematic way. We based our approach on the differential-form nature of stresses, one of the cornerstones of the connection between elasticity and geometry. We rewrote the static equilibrium equations into a differential-form shaped dimension-independent continuous stress function, that in simply connected domains is equivalent with static equilibrium. Then, we took the defining property of idealized space-frames (stresses are zero everywhere except at rod axes) and applied it to our continuous function, thereby deriving a dimension-independent discontinuous stress function that is equivalent with static equilibrium of space-frames. This approach allowed us not only to solve the previously open problem, but we also improved on the planar construction of Maxwell by being able to treat planar mechanical problems with non planar graphs. Apart from this efficiency we could also see how and why the stress function description works:

We saw that the stress functions are a parametrization of the self stresses of the structure and they should correspond to whatever is causing the static indeterminacy. In the continuous case the base-problem of elasticity is statically indeterminate, in the discontinuous case the roots of the indeterminacy are the loops in the extended structure. These loops are the generating elements of the first fundamental group of the extended structure, showing that the number of function-pieces required in the discontinuous stress function is determined by the topology and not the metric properties of the structure. Sticking with the idealized line-model of the structure and not taking material properties into account, these metric properties become important if one prescribes constraints in the internal force distribution, like introducing ball-joints enforcing truss-like behaviour. This will tie the discrete stress functions to line-geometry, as for special cases the dynames in Equation \eqref{eq:disc_indep} turn into projective line-coordinates. We hope to continue this work by investigating space-trusses this way. We would not mind arriving at some graphic representation of the internal force distribution of space-trusses if possible, but we do wish to derive it from geometry instead of relying on a representation scheme rooted only in tradition.

Furthermore, although variational methods at first might seem far from geometry, the use of the discontinuous geometric stress function being equivalent with static equilibrium can be seen when using the Principle of the Minimum of Complementary Potential Energy. This principle requires one to take variations enforcing static equilibrium, which may be cumbersome if tried from a direct description of internal forces. Using the discontinuous stress function provided here one only has to take variations in the space of the stress functions, making the use of this principle trivial.

\bibliographystyle{unsrt} 
\bibliography{grafobib}

\end{document}